\let\newfloat\newfloat@ltx
\theoremstyle{definition}
\newtheorem{definition}{Definition}
\newtheorem{lemma}{Lemma} 
\begin{document}

\title{Public-Key Quantum Authentication and Digital Signature Schemes Based on the QMA-Complete Problem}
\author{Le-Ran Liu}
\affiliation{Department of Physics and HK Institute of Quantum Science \& Technology, The University of Hong Kong, Pokfulam Road, Hong Kong, China}
\affiliation{Hong Kong Branch for Quantum Science Center of Guangdong-Hong Kong-Macau Greater Bay Area, Shenzhen, China}
\author{Min-Quan He}
\affiliation{Department of Physics and HK Institute of Quantum Science \& Technology, The University of Hong Kong, Pokfulam Road, Hong Kong, China}
\affiliation{Hong Kong Branch for Quantum Science Center of Guangdong-Hong Kong-Macau Greater Bay Area, Shenzhen, China}
\author{Dan-Bo Zhang}
\affiliation{Key Laboratory of Atomic and Subatomic Structure and Quantum Control (Ministry of Education),\\ Guangdong Basic Research Center of Excellence for Structure and Fundamental Interactions of Matter,\\ and School of Physics, South China Normal University, Guangzhou 510006, China}
\author{Z. D. Wang}
\affiliation{Department of Physics and HK Institute of Quantum Science \& Technology, The University of Hong Kong, Pokfulam Road, Hong Kong, China}
\affiliation{Hong Kong Branch for Quantum Science Center of Guangdong-Hong Kong-Macau Greater Bay Area, Shenzhen, China}

\begin{abstract}
   We propose a quantum authentication and digital signature protocol whose security is founded on the Quantum Merlin Arthur~(QMA)-completeness of the consistency of local density matrices. The protocol functions as a true public-key cryptography system, where the public key is a set of local density matrices generated from the private key, a global quantum state. This construction uniquely eliminates the need for trusted third parties, pre-shared secrets, or authenticated classical channels for public key distribution, making a significant departure from symmetric protocols like quantum key distribution. We provide a rigorous security analysis, proving the scheme's unforgeability against adaptive chosen-message attacks by quantum adversaries. The proof proceeds by a formal reduction, demonstrating that a successful forgery would imply an efficient quantum algorithm for the QMA-complete Consistency of Quantum Marginal Problem~(QMP). We further analyze the efficiency of verification using partial quantum state tomography, establishing the protocol's theoretical robustness and outlining a path towards practical implementation.
\end{abstract}


\maketitle

\section{Introduction}
\label{sec:introduction}

The advent of scalable quantum computers threatens the security of conventional public-key cryptosystems~\cite{Shor1994, Mosca2018}. Shor’s polynomial-time algorithm for integer factorization and discrete logarithms undermines RSA~\cite{rsa1978} and elliptic-curve cryptography~\cite{miller1985, koblitz1987}, while Grover’s quadratic speed-up lowers the effective strength of symmetric ciphers~\cite{Grover1996}. Although block ciphers such as AES can be hardened by increasing the key size, the prospective collapse of public-key infrastructure compels the search for alternative paradigms~\cite{Bernstein2017}. 

In response to the quantum threat, two principal directions have emerged. The first is post-quantum cryptography (PQC)~\cite{bernstein2017post}, which seeks to develop classical cryptographic schemes believed to be secure against both classical and quantum attacks. Notable examples include lattice-based encryption~\cite{micciancio2011lattice}, code-based cryptography~\cite{overbeck2009code}, and multivariate polynomial schemes~\cite{sakumoto2011public}. While these constructions currently resist known quantum algorithms, their long-term security remains an open question, especially in the face of unforeseen advances in quantum algorithms or cryptanalysis. The second direction is quantum cryptography~\cite{gisin2002, portmann2022security}, which leverages the fundamental principles of quantum mechanics to achieve information-theoretic security. Protocols such as quantum key distribution~(QKD) offer provable security guarantees based on the laws of physics, rather than computational assumptions, and represent a fundamentally different approach to secure communication in the quantum era~\cite{Bennett1984, Scarani2009, renner2008, shor2000simple}.
QKD is a foundational quantum cryptography method that enables information-theoretically secure key exchange via quantum principles, e.g. the no-cloning theorem~\cite{wootters1982,gisin2002}. When combined with one-time pads (OTP), it guarantees unconditional security~\cite{Bennett1984,vernam1926}. However, QKD is fundamentally a system for generating symmetric keys that cannot function securely without a pre-authenticated classical channel~\cite{mayers1998,renner2008}. QKD also does not provide the public-private key pairs required by modern asymmetric cryptographic systems, which offers features like nonrepudiation and scalable trust models through infrastructures like Public Key Infrastructure (PKI)~\cite{diffie1976}.

Following QKD, another major branch of quantum communication is Quantum Secure Direct Communication (QSDC)~\cite{bostrom2002, deng2004, long2007}, which aims to transmit secret messages directly over a quantum channel without first establishing a secret key.  This approach promises enhanced efficiency and immediacy by condensing key distribution and ciphertext transmission into a single quantum process~\cite{long2007}.  Recent advances have demonstrated its potential, with experimental systems achieving communication over 100~km of fiber~\cite{Zhang2017, zhang2022} and the development of small multi-user networks~\cite{qi2021fifteen}.  However, QSDC is not entirely self-sufficient, as it still fundamentally requires an authenticated classical channel for coordination and eavesdropping detection~\cite{deng2004}.

While QKD and QSDC aim to secure the transmission process,
quantum identity authentication (QIA) and quantum digital signature (QDS) protocols are designed to leverage quantum infrastructure to achieve secure communication~\cite{barnum2002,gottesman2001,wallden2015}. Identity authentication is the process of ensuring the identity of the communicating parties, guaranteeing they are who they claim to be~\cite{zhang2006}. Digital signatures, on the other hand, are designed to ensure the authenticity and integrity of the message itself, providing guarantees that it came from a specific sender and was not altered in transit~\cite{dunjko2014,roberts2017measurement}. Both of these functions are critically important and constitute the foundation of modern cryptography, which is why developing quantum-resistant versions is a major focus of research.

For QIA, early protocols often relied on the principles of QKD as a foundational layer~\cite{bostrom2002,deng2004}. The logic was to use the tamper-evident nature of quantum communication to establish trust. However, it is well-established that QKD by itself does not solve the authentication problem; it secures a key exchange but cannot verify the identity of the participants at the outset~\cite{mayers1998,Scarani2009}. Later protocols have attempted to build more sophisticated QIA schemes, sometimes embedding authentication directly into other protocols or combining quantum techniques with classical methods like hash functions~\cite{chen2022}. Despite these advances, a common thread persists: the need for a trusted third party or, more fundamentally, a pre-existing authenticated classical channel to bootstrap the process~\cite{barnum2002}. This dependency is required to prevent man-in-the-middle attacks where an adversary could impersonate a legitimate party during the initial communication.

Quantum digital signature is to use quantum methods to sign on messages, either quantum one-way function ~\cite{gottesman2001} or relying on non-locality of Bell state, non-cloning theorem, offering a higher level of security than their classical counterparts. However, QDS protocols face significant practical limitations and dependencies. Early schemes explicitly required a trusted third-party, or ``arbitrator,'' to validate and authenticate the signed message, creating a central point of failure~\cite{lo2006,amiri2016}. More recent protocols, including Measurement-Device-Independent (MDI) schemes, have tried to remove this dependency but still explicitly require authenticated classical channels for coordination~\cite{roberts2017measurement,Pu2018mdi}. Furthermore, the very nature of a ``quantum public key'' makes it difficult to manage; the no-cloning theorem makes it physically impossible to freely copy and distribute the key, and many protocols require advanced, and still largely experimental, technology like quantum memory to store the fragile quantum states for verification~\cite{lvovsky2009,heshami2016}. Recent research aiming to remove the trusted party has had to introduce other strong assumptions, such as the requirement of an un-tamperable quantum channel for key transmission~\cite{dunjko2014,collins2014}.



In this work, we propose a QIA–QDS protocol that eliminates the need for pre-registration, trusted third parties, and pre-authenticated classical channels.
Specifically, in our scheme, each user’s private key is represented by a quantum state, while the corresponding set of local reduced density matrices functions as the public key. The digital signature is realized by encoding classical messages into the global quantum state before transmission, thereby ensuring strong guarantees of message integrity and authenticity. Crucially, the security foundation of our protocol lies in the QMA-completeness of the QMP, also known as the Consistency of Local Density Matrices (CLDM) problem~\cite{liu2006}. The QMA-complete problem analogous to classical NP-complete problems—remains computationally intractable even for quantum computers.

\section{Preliminaries: The Quantum Marginal Problem and Complexity}
The QMP is a fundamental question concerning the relationship between a whole quantum system and its parts ~\cite{Coleman1963,Klyachko2006}. Formally, given a set of $n$ particles indexed by the set $I=\{1,\ldots ,n\}$, a collection of index subsets $J_k\subset I$, and a corresponding set of density matrices $\rho_{J_k}$, the QMP asks for the conditions under which a global state $\rho_I$ exists such that for all $k$, $\operatorname{Tr}_{I\setminus J_k}(\rho_I)=\rho_{J_k}$. This problem is also known as the $N$-representability problem in quantum chemistry ~\cite{LiuChristandlVerstraete2007}. 

For cryptographic purposes, we focus on the associated decision problem, known as the CLDM problem ~\cite{KempeKitaevRegev06}. Throughout this paper, we will use QMP to refer to the general conceptual problem and CLDM to denote the precise computational problem that underpins our security proof.

\begin{definition}[CLDM problem~\cite{liu2006}]
Consider a system of $n$ qubits.  We are given a collection of local
density matrices $\rho_1,\ldots ,\rho_m$, where each $\rho_i$ acts on a
subset of qubits $C_i\subseteq\{1,\ldots ,n\}$.  Every matrix entry is
specified with $\operatorname{poly}(n)$ bits of precision.  We also have
$m\le\operatorname{poly}(n)$, and each subset satisfies
$\lvert C_i\rvert\le k$ for some constant $k$.

In addition, a real number $\beta$ is provided (again with
$\operatorname{poly}(n)$ bits of precision) such that
$\beta \ge 1/\operatorname{poly}(n)$.

The task is to distinguish between the following two cases:
\begin{itemize}
  \item[YES:] There exists an $n$-qubit state $\sigma$ such that, for
        all $i$,
        \[
          \bigl\lVert
            \operatorname{Tr}_{\{1,\ldots ,n\}\setminus C_i}(\sigma)
            -\rho_i
          \bigr\rVert_1 = 0 .
        \]
  \item[NO:] For every $n$-qubit state $\sigma$ there is some $i$ for
        which
        \[
          \bigl\lVert
            \operatorname{Tr}_{\{1,\ldots ,n\}\setminus C_i}(\sigma)
            -\rho_i
          \bigr\rVert_1 \ge \beta .
        \]
\end{itemize}
\end{definition}

The CLDM problem is known to be QMA-complete, indicating that it is as hard as the most difficult problems verifiable by quantum computation. To clarify this classification, we briefly introduce the QMA complexity class.
The complexity class QMA is the quantum analogue of the classical complexity class NP. In the QMA framework, an all-powerful but untrustworthy prover (Merlin) sends a quantum state, or "witness,"  $\vert\psi\rangle$ to a polynomial-time quantum verifier (Arthur). Arthur performs a verification circuit on the witness and outputs 'accept' or 'reject'. A problem is in QMA if it satisfies two conditions:

\textbf{Completeness:} For any YES instance of the problem, there exists a witness state $\vert\psi\rangle$ that causes Arthur to accept with high probability (e.g., $P(\mbox{accept}) > 2/3$).

\textbf{Soundness:} For any NO instance of the problem, every possible witness state causes Arthur to be rejected with high probability (i.e., $P(\mbox{accept}) < 1/3$). 

The gap between the acceptance probabilities for YES and NO instances is crucial and can be amplified by repeating the protocol. A problem is QMA-complete if it is in QMA and any other problem in QMA can be reduced to it in classical polynomial time. 

A remarkable result in quantum complexity theory is that the CLDM problem is QMA-complete~\cite{liu2006}.  The computational structure of QMA is what makes its complete problems suitable for cryptography. QMA problems are fundamentally promise problems. The verifier is guaranteed that the input instance belongs to one of two disjoint sets: YES instances, for which a "good" proof exists, or NO instances, for which no convincing proof can be constructed. For CLDM problem, this promise manifests as a gap: the given marginals are either highly consistent (a YES case) or any potential global state will be highly inconsistent with at least one of them (a NO case). A cryptographic protocol can exploit this gap to distinguish between legitimate and malicious behavior. An honest user's actions, by design, will correspond to a YES instance of the underlying problem. A successful forgery, as will be shown, would require the creation of a valid witness for a NO instance, an act deemed impossible by the soundness property of QMA. Thus, the promise gap inherent to the complexity class provides the necessary separation for cryptographic security.

\section*{The QMP-Based Cryptographic Protocol}

Our protocol consists of
three phases that together realize a quantum public-key scheme. In the key generation phase, Alice’s private key is a polynomial-depth circuit; her public key is the full set of k-qubit marginals of the circuit’s output state, checkable via local consistency. In the authentication phase, Bob challenges Alice with an arbitrary M-qubit subset; Alice then returns the corresponding fragment, and Bob verifies its marginals against the public key. In the digital signature phase, Alice encodes a message into a unitary generated from some publicly-known, message-dependent transformation. A;ice applies the unitary to the challenged fragment, and any verifier can invert the unitary and test the marginals. The scheme requires no pre-registration. Its security is based on the hardness of reconstructing a highly entangled state from sparse local data.

\subsection*{Key Generation}

To initiate the key generation process, Alice first selects a security parameter $\lambda$ and constructs a classical description of a quantum circuit $\text{Circuit}_A$ with depth $\mathrm{poly}(\lambda)$. Applying $\text{Circuit}_A$ to the fixed initial state $\ket{0}^{\otimes N}$ yields her private key, the $N$-qubit state $\rho_A$. Once the private key state is prepared, Alice computes all $k$-qubit marginals by performing state tomography on each overlapping subsystem of size $k$. The resulting set of classical density matrices forms her public key, which she publishes. This workflow starts with Alice using her private circuit to prepare the global state $\rho_{A}$. She then publishes all its k-qubit reduced states. Anyone can download these marginals and check that they fit together consistently. However, without knowing the exact circuit parameters in $\text{Circuit}_A$, rebuilding the full $N$-qubit state is believed to be computationally infeasible.

\textbf{Alice's Private Key (\textit{sk}\textsubscript{A}):}  
Alice's private key is a classical description of an efficient quantum circuit, $\text{Circuit}_A$. This circuit, when applied to a standard initial state like $\ket{0}^{\otimes N}$, prepares a specific, highly entangled $N$-qubit system. The choice of $\rho_A$ should be such that it is highly-entangled thus its global entangled structure would be destroyed or only partially exist locally. The generation of large, structured entangled states is an active area of experimental research. The classical description of an efficient quantum circuit, $\text{Circuit}_A$ is to be used to generate Alice's private key. $\text{Circuit}_A$ is subject to a security parameter $\lambda$. The depth of $\text{Circuit}_A$ is $\text{poly}(\lambda)$.

\textbf{Alice's Public Key (\textit{pk}\textsubscript{A}):}  
Alice defines a set of $k$-local overlapping subsystems, $\{C_1,C_2,\ldots ,C_{\binom{N}{k}}\}$, where $C_N^k$ is a combinatorial number and S is a collection of the indices of qubits in the $N$-qubit entangled system $\rho_A$. Alice then generates the marginal density matrix for each subsystem by state tomography. 

Her public key, $\textit{pk}_\text{A}$, is the set of classical descriptions of these $k$-local density matrices,
$\textit{pk}_\text{A}= \{\rho_{C_1},\rho_{C_2},\ldots ,\rho_{C_{\binom{N}{k}}}\}$, which she makes publicly available. By its construction, the set of local density matrices representing Alice's public key is perfectly consistent, with the state $\rho_A$ serving as the unique global-state witness to this consistency. The pseudocode of key generation is shown as Algorithm~\ref{alg:keygen}.

\begin{algorithm}[ht]
  \caption{\textsc{KeyGen()}}
  \label{alg:keygen}
  \begin{algorithmic}[1]
    \REQUIRE security parameter $\lambda$
    \ENSURE $(\mathit{sk}_A,\mathit{pk}_A)$
    \STATE Choose a circuit $\mathrm{Circ}_A$ of $\mathrm{poly}(\lambda)$ size to prepares an $N$-qubit state $\rho_A$.
    \STATE $sk_{A} = \rho_A$
    \STATE Define $C_N^k$ subsystems $\{C_1, C_2,\ldots ,C_{\binom{N}{k}}\}$.
    \FOR{$k = 1,\ldots, \binom{N}{k}$}
      \STATE Compute local density matrix $\rho_{C_k}= \operatorname{Tr}_{\{1,\ldots,n\} - C_k} (\rho_A)$
      \STATE Append local density matrix $\rho_{C_k}$ to public key $\mathit{pk}_A$
    \ENDFOR
    \RETURN $(\mathit{sk}_A,\mathit{pk}_A)$
  \end{algorithmic}
\end{algorithm}


\subsection*{Authentication 
}

We design a challenge-response protocol to prove Alice's identity with a verifier Bob. 

\textit{Challenge:}  
In this protocol, Bob first send a challenge to Alice by randomly selecting an $M$-qubits subsystem from $\{1,\ldots ,N\}$ qubits system of Alice, where $k<M<N$. He sends the classical description of all the indices of qubits and send this challenge to Alice. The state Bob asked for is denoted as $s_M$, which is a string of indices of corresponding qubits. 

\textit{Response:}  
After receiving the challenge string, Alice uses her private key $\text{Circuit}_A$ to prepare the state $\rho_A$. According to the challenge string $s_M$. She then sends the state $\rho_M$ to Bob as a response.

\textit{Verification:}  
Bob receives multiple copies of the subsystem state $\rho_M$ and performs quantum state tomography to reconstruct the corresponding $k$-qubit local density matrices, which we denote by $\rho_{C_k} \;=\;\operatorname{Tr}_{\{1,\ldots,M\}\setminus C_k}(\rho_M)$.
He then checks each reconstructed marginal against the corresponding public‐key marginal \(\rho_{C_k}\) by verifying
\[
\frac{1}{2}\bigl\|\rho_{C_k} - \rho_{C_k}\bigr\|_1 \;\le\;\epsilon,
\]
for every $C_k\subset s_M$ and $|C_k| = k$, where $\epsilon$ is a predetermined acceptance threshold. If every inequality holds,
Bob accepts that the responder is Alice, as only she can produce the global state $\rho_A$ from which these statistics arise.

\subsection*{Digital Signature 
}

\textit{Signing:}  
To sign a classical message $m$, Alice applies a publicly known, message-dependent, and efficiently invertible unitary transformation $U_m$ to the state asked by Bob, $\rho_M$. In many digital signature protocols, there is a preprocessing process: a plain-texted, arbitrary, unstructured \(x\) is first compressed through a publicly specified cryptographic hash function \(h\), producing the fixed-length message~\cite{preneel1994cryptographic}
\(
  m \;=\; h(x).
\)
The digest \(m\) is then a standardized message that enters the signature protocol. After transformation, the resulting quantum state,
$\sigma_m=U_m\rho_M$, then constitutes the quantum digital signature for the message $m$.  Alice prepares multiple identical copies of $\sigma_m$ and transmits them to the verifier. 

In the following paragraph, we first give a proper definition or to say, limitation on the message that is to be sent, and then give a proper definition for a message-dependent transformation $U_m$:


\begin{definition}[Classical messages]\label{def:message}
Let $\mathcal{L}$ be a finite, publicly agreed-upon alphabet, say language.
A message is a finite word
\[
  m \;=\; m_1 m_2 \dotsm m_{|m|},\qquad
  m_j \in \mathcal{L}\;(1\le j\le|m|).
\]
Typically $|m|\le \gamma$, where $\gamma$ is the
allowed message length.  A simple example is: with
$\mathcal{L}=\{0,1\}$ we recover ordinary binary strings. 
\end{definition}

\begin{definition}[Message-dependent unitary $U_m$]\label{def:Um}
A publicly known universal gate set is given to construct quantum circuit and give operation on qubits. Such gate set is
\[
  \mathcal{G}=\{G_1,G_2,\ldots,G_L\}.
\]
 For every
$i\in\{1,\dots,L\}$ and every $\ell\in\mathcal{L}$ we specify a unitary
$G_i^{(\ell)}$ via the public rule
\[
  G_i^{(\ell)} =
  \begin{cases}
    \mathbb{I}, & \text{if $\ell$ encodes ``skip''},\\[4pt]
    G_i,        & \text{if $\ell$ encodes a
                  non-parametric gate},\\[4pt]
    G_i(\theta_\ell), & \text{if $G_i$ is a rotation and
                         $\ell\!\mapsto\!\theta_\ell\in[0,2\pi)$}.
  \end{cases}
\]

\paragraph{Construction of $U_m$.}
Read $m=m_1\dotsm m_{|m|}$ from left to right and assign gates cyclically with
$i(j)=(j\bmod L)+1$, rightmost-first-ordered,
\begin{equation}
  U_m \;=\; \prod_{j=1}^{|m|} G_{\,i(j)}^{\bigl(m_j\bigr)}.
  \label{eq:Um}
\end{equation}

\paragraph{Properties.}
\begin{enumerate}[label=(\roman*), leftmargin=*, itemsep=2pt]
  \item Efficient invertibility.  
        $U_m^{-1}$ is obtained by reversing the product
        in~\eqref{eq:Um} and taking adjoints, so both
        $U_m$ and $U_m^{-1}$ have depth $O(|m|)$.

  \item Injectivity. 
        Different messages change at least one factor
        in~\eqref{eq:Um}; hence the map
        \(
          \mathcal{U}:\mathcal{L}^{\ast}\!\to\!\mathrm{U}(2^{n}),\;
          m\mapsto U_m
        \)
        is injective.

  \item Public computability. 
        Because the rule $(i,\ell)\mapsto G_i^{(\ell)}$ is public,
        both $\mathcal{U}$ and its inverse
        $\mathcal{U}^{-1}$ are efficiently computable.
\end{enumerate}
\end{definition}

\medskip
During the signing phase Alice applies $U_m$ from
Definition~\ref{def:Um} to the challenged subsystem $\rho_M$,
producing the signature state $\sigma_m = U_m\rho_M$. 
The pseudocode of signing a message by private key is shown as Algorithm~\ref{alg:sign}.

\begin{algorithm}[ht]
  \caption{\textsc{Sign}$(\textit{sk}_A,s_M,m)$}
  \label{alg:sign}
  \begin{algorithmic}[1]
    \REQUIRE private key $\mathit{sk}_A$, challenge $s_M$, message $m$
    \ENSURE multiple copies of $\sigma_m$
    \STATE Alice uses $\mathrm{Circ}_A$ to prepare $\rho_A$.
    \STATE Alice uses $s_M$ and $\rho_A$ to prepare $\rho_M$.
    \STATE Alice applies the public, message-dependent unitary $U_m$ to get $\sigma_m = U_m\rho_M$.
    \STATE Alice outputs multiple copies of $\sigma_m$.
  \end{algorithmic}
\end{algorithm}

\textit{Verification:}  
Any party in possession of Alice's public key $\textit{pk}_\text{A}$, the message $m$, and the signature copies $\sigma_m$ can perform verification. The verifier's goal is to confirm that the received state, when untransformed, has marginals consistent with Alice's public key. To do this, the verifier first applies the inverse transformation $U_m^{-1}$ to each copy of the signature, yielding the state $\sigma_m'=U_m^{-1}\sigma_m$. 
Verification then proceeds exactly as in the Authentication procedure, with each instance of $\rho_{M}$ replaced by $\sigma_m'$.
The methods for performing this check are detailed in the Section \emph{Security Analysis}. The pseudocode of verifying a signature is shown as Algorithm~\ref{alg:verify}.

\begin{algorithm}[ht]
  \caption{\textsc{Verify}$(\textit{pk}_A,m,\sigma_m)$}
  \label{alg:verify}
  \begin{algorithmic}[1]
    \REQUIRE public key $\textit{pk}_\text{A}= \{\rho_{C_1},\rho_{C_2},\ldots ,\rho_{C_{\binom{N}{k}}}\}$, message $m$, signature $\sigma_m$, acceptance threshold $\epsilon$
    \ENSURE \textsc{accept} or \textsc{reject}
    \STATE Verifier construct $U_m^{-1}$ by message $m$ and defined rule
    \STATE Verifier reduction the signature state to private key fragment: $\sigma_m' = U_m^{-1}\sigma_m$.
    \FOR{$k = 1,\ldots, \binom{M}{k}$}
      \STATE Verifier compute $\sigma_{C_k} \;=\;\operatorname{Tr}_{\{1,\ldots,M\}\setminus C_k}(\sigma_m')$
      \IF{$\frac{1}{2}\bigl\|\rho_{C_k} - \rho_{C_k}\bigr\|_1 > \epsilon$}
      \STATE \textbf{return} REJECT
      \ENDIF
    \ENDFOR
    \STATE \textbf{return} ACCEPT
  \end{algorithmic}
\end{algorithm}

\section*{Security Analysis}
\label{sec:security}
 After rigorously defining the digital signature and authentication protocol, we need to analyze the security of the proposed scheme. The analysis is to be conducted within the standard cryptographic framework of an adaptive chosen-message attack (CMA), which is extended to accommodate a quantum adversary. The adversary, Eve, is hereby modeled as a quantum polynomial-time algorithm. The security goal of this model is  to achieve existential unforgeability (EUF)~\cite{GoldwasserMicaliRivest1988}, which asserts that an adversary cannot produce a valid signature for any new message. 

The security is defined by the Existential Unforgeability under adaptive quantum Chosen Message Attack (EUF-qCMA) game~\cite{BonehZhandry2013}, where a challenger runs Algorithm \textsc{KeyGen()} to generate a key pair $(\mathit{sk}_A,\mathit{pk}_A)$ and provides the public key $\mathit{pk}_A$ to the adversary Eve. Eve is then given oracle access to a signing oracle, $\mathcal O_{\text{Sign}}$. She can adaptively make a polynomial number of queries, sending messages $m_1,\ldots ,m_q$ to the oracle. For each query $m_i$, the oracle uses $\mathit{sk}_A$ to compute the signature $\sigma_{m_i}$ and returns a set of identical copies to Eve. After the query phase, Eve outputs a message-signature pair $(m_E,\sigma_E)$, where $m_E$ is a message she did not query, i.e.\ $m_E\notin\{m_1,\ldots ,m_q\}$. Eve wins the game if the $\textsc{Verify}(\mathit{pk}_A,m_E,\sigma_E)$ procedure returns \textsc{Accept} with a probability that is non-negligible in the security parameter.

The signature scheme is considered secure if no quantum polynomial-time adversary can win the EUF-qCMA game with more than negligible probability. As the digital signature scheme is constructed on the identity authentication scheme, the proof of security of the authentication model is also given in This model is a quantum generalization of well-established classical security notions.

\subsection*{Proof of Unforgeability (Reduction to CLDM problem)}

The proof of unforgeability of our protocol proceeds by reduction. We demonstrate that if a quantum polynomial-time adversary E could successfully forge a signature, then we could construct another quantum polynomial-time algorithm F that uses E as a subroutine to solve the QMA-complete CLDM problem ~\citep{KempeKitaevRegev06,LiuChristandlVerstraete2007}. Since CLDM problem is believed to be intractable for quantum computers (that is, $\textsf{BQP} \neq \textsf{QMA}$), this implies that no such adversary E can exist.

\textbf{Theorem:} The QMP-based digital signature scheme is existentially unforgeable under adaptive chosen-message attacks, assuming $\textsf{BQP} \neq \textsf{QMA}$.

\textbf{Proof}:   
Assume, for the sake of contradiction, that there exists a quantum polynomial-time adversary E that wins the EUF-qCMA game with non-negligible probability $\delta$. We construct an algorithm F to solve a given instance of CLDM problem. Such algorithm F receives an instance of the CLDM problem, which consists of a set of $k$ local density matrices $\rho_{C_k}'$ and a promise that this set is either a YES instance (highly consistent) or a NO instance (highly inconsistent). F's task is to decide which is the case. Such algorithm F then initiates the EUF-qCMA game with the forger E and sets the public key for the game to be the CLDM instance it was given: $\mathit{pk}_E\leftarrow\rho_{C_k}'$. Using the oracle provided in the EUF-qCMA game, when E queries the signing oracle for a message $m_i$, B is faced with a challenge: it cannot generate the signature because it does not know the global state $\rho_E$ corresponding to the public key (and for a NO instance, no such state exists). However, the reduction cleverly avoids this issue. The security proof does not require B to answer the queries correctly. The existence of a successful forger is assumed regardless of how oracle queries are handled. 

After its queries, the adversary E outputs its forgery: a pair $(m_E,\sigma_E)$ for a new message $m_E$. By our initial assumption, this forgery must pass the verification check with non-negligible probability. Algorithm F takes the forged quantum state $\sigma_E$ and applies the publicly known inverse unitary $U_{m_E}^{-1}$ to obtain the state $\sigma_m' = U_{m_E}^{-1}\sigma_E$. According to the \textsc{Verify} algorithm, for the signature to be valid, the marginals of $\sigma_m'$ must be consistent with the public key $\rho_{C_k}'$. This means the state $\sigma_m'$ is a quantum witness that satisfies the consistency conditions of the original CLDM problem instance provided to F.  F can now use $\sigma_m'$ to solve the CLDM problem. It submits the state $\sigma_m'$ as a witness to a standard QMA verifier for CLDM. If the original CLDM instance was a NO instance, the soundness property of QMA guarantees that no quantum state can serve as a convincing witness. Therefore, if E produces a forgery, the CLDM instance given to F could not have been a NO instance. If the original CLDM instance was a YES instance, a valid witness exists, and the forger E might succeed.

By observing whether the forger succeeds in producing a valid witness, F can distinguish between YES and NO instances of CLDM problem . E successful forgery by E implies the instance is YES. The absence of a successful forgery (over many runs) implies the instance is NO. This allows F to solve CLDM problem with a non-negligible advantage, which contradicts the assumption that CLDM problem is QMA-complete.

Therefore, the initial assumption must be false: no such polynomial-time quantum adversary E can exist. The signature scheme is secure.

\subsection*{Authentication Security}

The security of the authentication protocol follows a similar logic. An imposter, Eve, attempting to respond to Bob's challenges would need to produce quantum states whose measurement statistics on a subsystem $C_k$ match those of the public marginal $\rho_{C_k}$. To do this successfully for arbitrary challenges across all subsystems, Eve would effectively need to possess a global state consistent with the entire set $\rho_{C_k}$. The ability to generate such a state on demand is equivalent to solving the CLDM problem.

\subsection*{Non-Repudiation and Transferability}

The protocol provides essential properties for a digital signature scheme.

\textbf{Non-Repudiation}: Alice cannot deny having signed a message $m$ if a valid signature $\sigma_m$ exists. The verification process is public and relies only on publicly available information ($\mathit{pk}_A$, $m$). If \textsc{Verify} accepts, it is a mathematical proof that the provided state has the correct properties relative to the public key. The link between Alice's identity and her public key is a prerequisite for any public-key system and is typically handled by a public ledger or directory. 

\textbf{Transferability}: A recipient, Bob, who has received and verified a signature $(m,\sigma_m)$, can forward these to a third party, Victor. Victor can independently perform the same \textsc{Verify} procedure using Alice's public key to convince himself of the signature's validity. This transferability is a direct consequence of the public nature of the verification algorithm.

\subsection*{Necessity of the $k<M<N$ restriction} 
A fundamental design choice in both the authentication and the
signature protocol is that Bob’s challenge never asks Alice to reveal
her entire $N$-qubit private state.  Instead, he selects a random
subsystem of size $M$ with
\(
  k < M < N .
\) We justify this restriction with the following lemma.

\begin{lemma}[State\,\&\,Key-extraction attack]\label{lem:full-leak}
Suppose an adversarial verifier is allowed, in a single session, to
demand the full $N$-qubit state
$\rho_A$ that serves as Alice’s private key.
Then after that session the verifier can, with overwhelming
probability, impersonate Alice in all future executions of the
protocol and forge signatures for arbitrary messages.
\end{lemma}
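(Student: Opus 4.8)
The plan is to show that once the full state is handed over, the adversary holds exactly the object whose secrecy the entire scheme depends on---namely a global witness consistent with every published marginal---so that both the authentication and the signature protocols collapse. First I would recall that Alice's honest behaviour is determined solely by her ability to emit copies of $\rho_A$: in authentication she returns $\operatorname{Tr}_{\{1,\dots,N\}\setminus s_M}(\rho_A)$, and in signing she returns $U_m\operatorname{Tr}_{\{1,\dots,N\}\setminus s_M}(\rho_A)$, while every verification test compares only the $k$-qubit marginals of the (un-transformed) response against the public key. Thus any party able to produce copies of a state $\hat\rho$ that is trace-close to $\rho_A$ can pass every test, for every challenge $s_M$ and every message $m$.

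Next I would use the session in question to extract such a $\hat\rho$. Demanding the ``full state'' means receiving the many copies of $\rho_A$ that tomography presumes; running full $N$-qubit state tomography then yields a classical description $\hat\rho$ with $\tfrac12\bigl\|\hat\rho-\rho_A\bigr\|_1\le\delta$ for any target $\delta$, after which the adversary can prepare fresh copies of $\hat\rho$ on demand in all later sessions. The verification bound then follows from two elementary facts: trace distance is non-increasing under the partial trace $\operatorname{Tr}_{\{1,\dots,M\}\setminus C_k}$, and it is invariant under the publicly known unitaries (the verifier applies $U_m^{-1}$ before testing). Consequently every checked marginal of the adversary's reply lies within $\delta$ of the corresponding marginal of $\rho_A$, i.e.\ within $\delta$ of the public key; choosing $\delta$ below the acceptance slack $\epsilon$ makes \textsc{Verify} (and the analogous authentication check) accept with overwhelming probability for arbitrary $s_M$ and arbitrary forged message $m_E$, which is precisely the claimed impersonation and universal forgery.

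The hard part will be pinning down what ``demand the full state in a single session'' operationally grants, because the no-cloning theorem makes a single copy useless: an adversary holding one copy of $\rho_A$ can answer at most one tomographic challenge and cannot reproduce the state afterwards. I would therefore make explicit that the attack requires enough copies to characterise $\rho_A$---exactly the many-copy regime that the protocol's own tomographic verification already assumes---and note that this step costs $\operatorname{poly}(2^N,1/\delta)$ copies and measurements, so the attack is information-theoretic rather than efficient. This is in fact the content of the lemma: handing over the whole state delivers the CLDM witness directly, bypassing the QMA-hard reconstruction from marginals that secures the scheme, whereas restricting challenges to $k<M<N$ guarantees that only marginals---never a global witness---are ever exposed, leaving any adversary still confronted with the CLDM problem. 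This both motivates and justifies the $k<M<N$ restriction.
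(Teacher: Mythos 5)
Your proposal is correct, but it takes a genuinely different route from the paper's proof. The paper argues directly at the level of quantum memory: the adversary stores the received $\rho_A$, answers any future authentication challenge by handing over the appropriate subsystem of the stored state, and forges signatures by applying the public $U_m$ itself, so every check passes with certainty. You instead interpose a tomographic extraction step: use the many copies handed over in the session to run full $N$-qubit tomography, obtain a classical description $\hat\rho$ with $\tfrac12\lVert\hat\rho-\rho_A\rVert_1\le\delta$, re-prepare fresh copies on demand, and conclude via monotonicity of trace distance under partial trace and its invariance under the public unitaries that every checked marginal is within $\delta<\epsilon$ of the public key. Your route is the more careful one on exactly the point the paper glosses over: the paper's claim that the adversary can ``store the state in a quantum memory and reuse it indefinitely'' is physically suspect, since each future verification session consumes copies (the verifier performs tomography on what it receives), so a finite cache of stored copies cannot by itself sustain impersonation in \emph{all} future executions; your classical-description extraction repairs this and makes the perpetual-forgery claim rigorous. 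What you pay is efficiency --- the extraction and re-preparation cost $\operatorname{poly}(2^N,1/\delta)$ --- and you correctly flag the attack as information-theoretic rather than polynomial-time; this distinction is immaterial to the lemma, whose point is that security collapses entirely once the global witness leaves Alice's laboratory, and your closing observation that this is what motivates the $k<M<N$ restriction matches the paper's intent exactly.
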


\begin{proof}
Once the adversary receives $\rho_A$, it can store the state in a
quantum memory and reuse it indefinitely; no inverse transformation or
measurement is required.  In the authentication protocol, responding to
any future challenge merely means measuring the appropriate subsystem
of that stored state; hence the adversary’s success probability is~$1$.

For the digital signature protocol, recall that Alice signs a message
$m$ by applying the public efficient invertible unitary $U_m$ to
$\rho_A$, producing $\sigma_m=U_m\rho_M$, where $\rho_M$ is generated from $\rho_A$ and new challenger $s_M$.  Because the
adversary now possesses $\rho_A$, it can reproduce
exactly the same procedure:
\[
  \sigma_m^{\text{fake}}
  \;=\;
  U_m\rho_M.
\]
Verification applies $U_m^{-1}$ and checks the marginals of the
resulting state against those published in the public key; the forged
state passes with certainty.  Hence existential unforgeability is
utterly broken once the full $N$-qubit key leaves Alice’s
laboratory.
\end{proof}

Lemma~\ref{lem:full-leak} shows that exposing the entire state would
collapse security to the trivial level: Bob (or any malicious
verifier) could record $\rho_A$ and become a perfect clone of
Alice.  By limiting each challenge to an $M$-qubit slice,
with $M$ strictly less than $N$, we prevent any single verifier from
obtaining enough information to reconstruct the global state, guaranteed by the QMA-completeness of the CLDM problem. Moreover, as the locations of the $M$ qubits are chosen randomly each time,
collecting the full state's all possible subsystem of $M$ qubits would require $\binom{N}{M}$ protocol
runs, during which Alice would notice the abnormal requests and related key leakage.  This subsampling strategy is therefore essential to preserve
both impersonation resistance and signature unforgeability while still
allowing efficient verification.


\subsection*{Verification Efficiency and Practical Considerations}

The practical viability of this protocol hinges on the efficiency of
the \textsc{Verify} algorithm. The core task is to check the consistency
condition $\lVert \operatorname{Tr}_{\{1,2,\ldots,\binom{N}{M}\}- C_k}{\sigma_m'}\! -
\rho_{C_k}\rVert \le \epsilon$ for each subsystem $C_k$. We analyze two
prominent quantum procedures for this task --- to use partial Quantum State Tomography (pQST) technique.

\paragraph*{Verification via Partial Quantum State Tomography (pQST)}
In this approach, the verifier reconstructs a classical description of
the marginals of the received state and compares them to the public
key. 

For each subsystem $C_k$, the verifier uses the
provided copies of
$\sigma_m' = U_{m}^{-1}\ket{\sigma_{m}}$ to perform quantum state
tomography on that $k$-qubit subsystem. This yields an estimate of the local subsystem ${\rho}_{C_k'}$. The verifier then classically computes a distance metric, such as the trace distance, between the reconstructed
local density matrix ${\rho}_{C_k'}$ and the public key subsystem
$\rho_{C_k}$.

\textit{Resource Consumption:} While full tomography of an $N$-qubit state is infeasible, scaling exponentially with $N$, pQST is only performed on
small, $k$-qubit subsystems. The number of state copies required to
achieve a precision $\epsilon$ for a $d$-dimensional system
($d = 2^{k}$) scales as
$\mathcal{O}(d^{2}/\epsilon^{2}) = \mathcal{O}(4^{k}/\epsilon^{2})$
~\citep{ParisRehacek2004}. For a small and fixed subsystem size $k$,
this is efficient. The total cost is polynomial in the number of
local subsystems.

In the realistic implementation, the quantum states used in our protocol will be subject to decoherence due to environmental interactions and operational imperfections. Such noise will affect both Alice's preparation of her private state and the transmission of the signature state to the verifier. As a result, even an honest signature will not pass a perfect verification check.

The protocol must therefore incorporate an error threshold. The verifier will accept a signature if the measured inconsistency is below a threshold. This threshold must be carefully calibrated: it must be large enough to tolerate the expected level of natural decoherence but small enough to reliably detect malicious modifications that would constitute a forgery. Ultimately, for the protocol to be truly scalable and secure over long distances or long computational times, it may be implemented using logical qubits protected by a Quantum Error Correction (QEC) code. QEC schemes encode the information of a single logical qubit across many physical qubits, allowing for the detection and correction of errors. The security analysis presented in this paper assumes ideal, error-free qubits and serves as the theoretical foundation upon which a fault-tolerant version of the protocol can be built.

\section{Discussion}
\label{sec:discussion}
This work has introduced a novel framework for public-key quantum cryptography based on the computational hardness of the quantum marginal problem. The resulting authentication and digital signature protocol is, to our knowledge, the first to leverage the QMA-completeness of a natural physical problem to achieve security.

The protocol's principal advantage is its self-contained and decentralized nature. It successfully establishes a true public-key system-without any reliance on trusted third parties, pre-shared secrets between users, or pre-authenticated classical channels for the distribution of public keys. This represents a significant step toward building scalable quantum networks where trust can be established dynamically and securely based on the laws of quantum mechanics and computational complexity. The security is proven to be robust, with existential unforgeability against adaptive chosen-message attacks by quantum adversaries reducible to the intractability of the CLDM problem.

Based on this new formalism of quantum cryptography, we hereby propose several promising avenues for future research based on our protocol. First, the protocol can be optimized by exploring different families of global states $\rho$ and different configurations of overlapping subsystems $\{C_k\}$ to find the ideal balance between the strength of the security assumption and the resource costs of key generation and verification. Second, a small-scale proof-of-principle experiment on current noisy intermediate-scale quantum (NISQ) hardware is a direct next step. Such an experiment could involve generating appropriate quantum state as private key 
, distributing its local subsystems as a public key, and performing the verification steps to demonstrate the protocol's core mechanics, even in the presence of noise. Finally, the core methodology—using the hardness of a QMA-complete problem as a cryptographic primitive—could be applied to other quantum-computationally hard problems, including estimating the ground-state energy of specific local Hamiltonians or verifying properties of quantum circuits, potentially leading to new cryptographic functionalities with different security and efficiency profiles.

\normalem
\bibliography{references}
\end{document}